\theoremstyle{plain}
\newtheorem{theorem}{theorem}
\newtheorem{algorithm}[theorem]{Algorithm}
\newtheorem{corollary}[theorem]{Corollary}
\newtheorem{lemma}[theorem]{Lemma}
\newtheorem{remark}[theorem]{Remark}
\newcommand{\ehb}{\hat{\eta}^N }
\newcommand{\fb}{f }
\begin{document}

\begin{frontmatter}

\title{Stability of conditional Sequential Monte Carlo}
\thankstext{t1}{Bernd Kuhlenschmidt's work is supported by the UK Engineering and Physical Sciences Research Council (EPSRC) grant EP/H023348/1 for the University of Cambridge Centre for Doctoral Training, the Cambridge Centre for Analysis. This work was completed during the Ph.D.\ candidature of the first author.  Title of dissertation: \emph{On the Stability of Sequential Monte Carlo Methods for Parameter Estimation.} Examiners: Prof.\ Arnaud Doucet and Prof.\ S.J.\ Godsill. Examination date:  29 January 2015.}
\runtitle{Stability of conditional Sequential Monte Carlo}

\author{\fnms{Bernd} \snm{Kuhlenschmidt}\thanksref{t1}\corref{}\ead[label=e1]{bkuhlensch@gmail.com}}
\address{Cambridge Centre for Analysis\\ University of Cambridge (UK)\\ \printead{e1}}

\author{\fnms{Sumeetpal S.} \snm{Singh}\ead[label=e2]{sss40@cam.ac.uk}}
\address{Department of Engineering\\ University of Cambridge (UK)\\  \printead{e2}}

\runauthor{B. Kuhlenschmidt and S. Singh}

\begin{abstract}
The particle Gibbs (PG) sampler is a Markov Chain Monte Carlo (MCMC) algorithm, which uses an interacting particle system to perform the Gibbs steps. Each Gibbs step consists of simulating a particle system conditioned on one particle path. It relies on a conditional Sequential Monte Carlo (cSMC) method to create the particle system. We propose a novel interpretation of the cSMC algorithm as a perturbed Sequential Monte Carlo (SMC) method and apply  telescopic decompositions developed for the analysis of SMC algorithms \cite{delmoral2004} to derive a bound for the distance between the expected sampled path from cSMC and the target distribution of the MCMC algorithm. This can be used to get a uniform ergodicity result. In particular, we can show that the mixing rate of cSMC can be kept constant by increasing the number of particles linearly with the number of observations. Based on our decomposition, we also prove a central limit theorem for the cSMC Algorithm, which cannot be done using the approaches in \cite{Andrieu2013} and \cite{Lindsten2014}.  
\end{abstract}





\end{frontmatter}

\section{Introduction}
Markov Chain Monte Carlo (MCMC) methods are a powerful and widely used tool in computational statistics, in particular for the Bayesian estimation of static  parameters in Hidden Markov Models (HMM). There has been a surge of interest in applying these methods to increasingly sophisticated and high dimensional models. Standard MCMC methods might fail in these scenarios since they might require the computation of intractable quantities. Recently \cite{Andrieu2010} have introduced particle Markov Chain Monte Carlo (pMCMC) methods, which are based on the idea of using a SMC method inside the MCMC. This combination of SMC and MCMC proves to be useful in several applications, since the straightforward implementation of MCMC methods such as Gibbs samplers or Metropolis Hastings might lead to very slowly mixing algorithms, a problem that pMCMC seems to solve to some extent. 

PMCMC has been applied to a wide range of areas such as Hydrology \cite{Vrugt2013}, Finance \cite{Pitt2011}, Systems Biology
\cite{Golightly2011}, Social networks \cite{Everitt2012}, and Electricity forecasting \cite{Launay2013}. 

The literature currently focusses on two types of pMCMC: particle Metropolis Hastings (pMH) and particle Gibbs (pG). In pMH, an SMC method is used to compute the likelihood for the proposed parameter values in the acceptance step. pMH uses an SMC method
to obtain an unbiased estimator of the likelihood terms needed to compute the acceptance probability for a new proposal. 

This method can be regarded as a perturbed Metropolis Hastings Algorithm and has been analysed in \cite{Andrieu2013}, \cite{Andrieu2010a} and \cite{Andrieu2009}. 
PG, on the other hand, is based on conditioning an SMC sampler on a reference trajectory and using the weights of the simulated particles at the final time step to select one path. This path constitutes a new sample of the pG Algorithm. This Algorithm can be viewed as a Gibbs sampler for a model where the random variables generated by the SMC sampler are treated as auxiliary variables. Another view would be that the pG Algorithm defines a Markov kernel with correct invariant measure, i.e. the invariant measure is the Bayesian posterior from which samples are needed. This is the view further developed in this paper.

One of the main issues with using SMC to obtain the smoothing distribution of $X_{1:n}$ conditional on the observations $Y_{1:n}$ is the path degeneracy phenomenon which means that if the SMC Algorithm is applied for a large $n$ eventually all particles will have a common ancestor. This problem also plagues pG. One solution to this is to resample in the backward direction as suggested in \cite{Whiteley2010}. \cite{Chopin2012} show that this leads to a reversible Markov kernel. 

Recently, many contributions have been made to improving and tuning pMCMC, e.g. an analysis of different resampling schemes \cite{Chopin2012}, the development of second order pMCMC methods \cite{citeulike:12797899}, backward resampling \cite{Link2012} and \cite{Whiteley2010} and parallelization \cite{Henriksen_parallelimplementation}.

The accuracy of the SMC Algorithm can be improved by increasing the number of particles $N$, which also leads to an increase in computational effort. In the context of particle MCMC one can increase $N$ to achieve better mixing properties of the Algorithm or increase the number of MCMC steps. In order to evaluate how to choose $N$, performance bounds and stability results are warranted.  \cite{Andrieu2010a} analyse particle MH as a noisy version of an MH Algorithm and obtain ergodicity results. \cite{citeulike:11421643} shows that it is optimal in some sense to increase $N$ at least linearly with the number of observations $n$. 
The question is if there are similar results for pG. \cite{Chopin2012} have estalished a stability result for the cSMC kernel using a coupling argument, showing that the difference between two differently initialized cSMC kernels goes to 0 as the number of particles goes to infinity. \cite{Andrieu2013} have derived a bound for the ergodicity of cSMC and the pGibbs kernel. They give explicit rates using strong mixing assumptions, which imply that by choosing $N$ linearly dependend on $n$ one can stabilize cSMC.  They also derive  necessary conditions for uniform ergodicity and consider a complete pG scheme for parameter estimation, sampling both paths and static variables.
\cite{Lindsten2014} derive a similar ergodicity bound under strong mixing assumptions. They also derive an ergodicity result for more general mixing assumptions. The conclusion of this result is that it is sufficient to increase the number of particles $N$ as $n^\delta$ where $\delta \geq 1$ can be computed explicitly. 
The proofs of both \cite{Andrieu2013} and \cite{Lindsten2014} are based on a minorization of the cSMC kernel, even though the details of the proof techniques are different. 

\subsection{Contributions}
In this paper we employ a different approach to \cite{Andrieu2013}, \cite{Lindsten2014}, \cite{Chopin2012}   to obtain stability and ergodicity results. We directly consider the difference between the cSMC kernel and the target measure and apply a telescopic decomposition to obtain a bound for this. Telescopic decompositions are a standard tool in the analysis of SMC algorithms and can be applied to obtain the approximation error, central limit theorems, concentration of measure, propagation of chaos and various other results, see  \cite{delmoral2004}, \cite{delmoral:hal-00932211}. We have recently come to know about the very recent work \cite{P.DelMoralR.Kohn2014} where standard results for SMC have been provided for the cSMC, e.g. propagation of chaos. The proof techniques and results in this substantial work differ from ours.
We show that these decompositions asymptotically lead to the same ergodicity bounds as in \cite{Andrieu2013} and \cite{Lindsten2014}. We also use a telescopic decomposition to derive a central limit theorem for cSMC, which is novel. An interesting corollary of this is that the particle system created by the cSMC Algorithm exhibits the same asymptotic behavior as a conventional SMC Algorithm. 

This paper is organised as follows. Section 2 introduces the notation and describes the cSMC Algorithm which is analysed in the paper. Section 3 analyses the perturbation of the Boltzmann-Gibbs transformation which originates from conditioning the cSMC Algorithm on a path. Section 4 derives a general stability result for cSMC using a telescopic decomposition. Section 5 derives explicit error bounds for this stability result using mixing assumptions. Section 6 proves a central limit theorem. Section 7 discusses the results of this paper and compares them to existing results in the literature. 

We have recently come to know about the very recent work \cite{P.DelMoralR.Kohn2014}  where standard results for SMC have been provided for cSMC, e.g. propagation of chaos. The proof techniques and results in this substantial work differs from ours.

\section{Definitions and notation}\label{sect:def}
Let $(X_t)_{t\geq 0}$ be a discrete-time $\mathcal{X}$-valued Markov chain with initial law $m_0 (dx_0)=m_0 ( x_0 ) dx_0 $ and transition density $q( x_{k+1} \left| x_k \right   )$. Let $( G_t )_{t\geq 0}: \mathcal{X} \rightarrow \mathbb{R}$ be a sequence of potential functions.  In the context of HMM this would be $G_t (x_t) = g_t (x_t, y_t )$ where  $y_t \in \mathcal{Y}$ are the observations of the hidden state $x_t$ and $(g_t )_{t\geq 0 }$ are the observation likelihood densities. 

One can define the following path measures
\[
\eta_{n}(dx_{0:n})=\frac{m_{0}(dx_{0})\prod\limits _{i=0}^{n-1}dx_{i+1}G_{i}(x_{i})q(x_{i+1}\left|x_{i}\right)}{\int m_{0}(dx_{0})\prod\limits _{i=0}^{n-1}dx_{i+1}G_{i}(x_{i})q(x_{i+1}\left|x_{i}\right)}
\]

\[
\pi_{n}(dx_{0:n})=\frac{m_{0}(dx_{0})\prod\limits _{i=0}^{n-1}dx_{i+1}G_{i}(x_{i})q(x_{i+1}\left|x_{i}\right) G_n (x_n )}{\int m_{0}(dx_{0})\prod\limits _{i=0}^{n-1}dx_{i+1}G_{i}(x_{i})q(x_{i+1}\left|x_{i}\right) G_n (x_n )}
\]

and write $\eta_0 (dx_0) = m_0 (dx_0 )$. For HMMs $\pi_n $ is the law of $X_{0:n}$ given $Y_{0:n}=y_{0:n}$ whereas $\eta_n$ is the law of $X_{0:n}$ given $Y_{0:n-1}=y_{0:n-1}$.

In the following let $f_{n}:\mathcal{X}^{n+1} \rightarrow \mathbb{R}$  be a measurable function of
the path space with $\left\Vert f_{n}\right\Vert _{\infty}\leq1$. Let 
\begin{align*}
&Q_{k,n}  (f_n )(x_{0:k})\\
 & =\int dx_{k+1:n}G_{k}(x_{k})q(x_{k+1}\left|x_{k}\right)G_{k+1}(x_{k+1})q(x_{k+2}\left|x_{k+1}\right)\cdots G_{n-1}(x_{n-1})q(x_{n}\left|x_{n-1}\right) f_n (x_{0:n} )
\end{align*}
and
\begin{align*}
P_{k,n}  (f_{n})(x_{0:k})  = \frac{Q_{k,n} (G_n f_n ) (x_{0:k})}{   Q_{k,n} (G_n ) (x_{0:k})   }
\end{align*}

The aim behind particle Gibbs is to construct a Markov kernel for sampling from the target distribution $\pi_n$ using a cSMC Algorithm.  
The basic idea behind cSMC is to define an extended target distribution $\pi^N_T ( dx_{0:T}^{1:N} , da_{0:T-1}^{1:N} , dn^{*}  )$ defined below  using auxiliary variables  and to construct a Gibbs sampler targeting this measure. 

Let $\rho_t (z_t^{1:N}, da_t^{1:N})$ be the resampling distribution of the SMC method. For example for multinomial resampling as in Algorithm \ref{alg:csmcparticles} below this is $\rho_t (z_t^{1:N}, da_t^{1:N})= \prod\limits_{i=1}^N \frac{G_t (z_t^{a^i_t} )}{\sum\limits_{j=1}^N G_t (z_t^j )}$. 
The extended target is defined as
\begin{align*}
\pi^N_T &( dx_{0:T}^{1:N} , da_{0:T-1}^{1:N} , dn^{*}  ) \\
&= \frac{1}{\mathcal{Z}_T} m_0^{\otimes N} (dz^{1:N}_0)\\
&\times  \prod\limits_{t=1}^T \left\{  \left(  \frac{1}{N} \sum\limits_{n=1}^N G_{t-1} (z^n_{t-1} ) \right) \rho_{t-1} (z_{t-1}^{1:N} , da_{t-1}^{1:N}   )
\prod\limits_{n=1}^N q \left(dz_t^n \left| z_{t-1}^{a^n_{t-1}}  \right. \right) \right.\\ 
& \left. \times \frac{1}{N} G_T (z_T^{n^{*}})
 \right\}
\end{align*}
where
\[
\mathcal{Z}_T = m_{0}(dx_{0})\prod\limits _{i=0}^{n-1}dx_{i+1}G_{i}(x_{i})q(x_{i+1}\left|x_{i}\right) G_n (x_n )
\]

 It has been shown in \cite{Andrieu2010} that the marginal distribution of the random variable $Z^{n^{*}}_T$  sampled from $\pi^N_T $ is $\pi_T$. To construct a Gibbs sampler for this measure a conditional SMC Algorithm is used, i.e. an SMC Algorithm conditioned on a reference path $\bar{x}_{0},\ldots, \bar{x}_{n}$. The cSMC Algorithm takes as its input a fixed path, then creates a system of $N$ particles $X_{0:n}^{1:N}$ and ancestor indices 
$A_{0:n-1}^{1:N}$ 
To simplify the presentation of the Algorithm write $X^0_0 := \bar{x}_0, X^0_1 := \bar{x}_1, \ldots, X^0_n := \bar{x}_n$, i.e. particle zero is the fixed reference path. 

\begin{algorithm}\label{alg:csmcparticles}

\begin{itemize}
\item[ ]

\item Initialisation: for $i=1,\ldots, N$ set $X_{0}^{i}\sim m_{0}$ 
\item For $k=1,\ldots,n$
\begin{itemize}
\item Selection: For $i=1,\ldots,N$ select 
\[
A_{k-1}^{i}\sim\text{Multinomial}((w_{k-1}^{j})_{0\leq j\leq N})
\]
where 
\[
w_{k-1}^{i}=\frac{G_{k-1}(X_{k-1}^{i})}{G_{k-1} ( X^0_{k-1} ) + \sum\limits _{j=1}^{N}G_{k-1}(X_{k-1}^{j})}
\]

\item Propagation: For $i=1,\ldots,N$ 
\[
X_{k}^{i}\sim q\left( \cdot\left|X_{k-1}^{A_{k-1}^{i}} \right. \right)
\]
\end{itemize}
\end{itemize}
\end{algorithm}

The only modification of this Algorithm to standard SMC is in the selection step, since
the selection is over $X_{k}^{0},\ldots,X_{k}^{N}$ instead of $X_{k}^{1},\ldots,X_{k}^{N}$ which means that the fixed path can be selected for propagation. Algorithm \ref{alg:csmcparticles} can be used to define an algorithm which takes a conditional path $\bar{x}_{0:n}$ as its input and returns a new path.

\begin{algorithm}[cSMC]\label{alg:csmc}
\begin{itemize}
\item[ ] 

\item Create a system of particles $\left( X^{i}_{j} \right)_{1\leq i \leq N, 0 \leq j \leq n}$ and ancestor indices 

$\left( A^{i}_{j} \right)_{1\leq i \leq N, 0 \leq j \leq n-1}$ using Algorithm 1 conditioned on $\bar{x}_{0:n}$
\item Sample
\[
N^{*} \sim\text{Multinomial}((w_{n}^{j})_{0\leq j\leq N})
\]
where for $i=1,\ldots,N$
\[
w_{n}^{i}=\frac{G_{n}(X_{n}^{i})}{G_{n} ( X^0_{n} ) + \sum\limits _{j=1}^{N}G_{n}(X_{n}^{j})}
\]
\item Return the path $\left(X^{B_k^{N^*}}_k  \right)_{0\leq k \leq n} $ where the $B^i_k$ is defined by
\[
B^{i}_n = i
\]
and  for $k=n-1, \ldots, 0$
\[
B^{i}_k = A_{k}^{B^{i}_{k+1}}
\]
\end{itemize}
\end{algorithm}

Algorithm \ref{alg:csmc} can be regarded as a Markov kernel on $\mathcal{X}^{n+1}$ denoted by $P^N_n (\bar{x}_{0:n} , dx_{0:n})$ and it can be shown that this Markov kernel has $\pi_n$ as its invariant measure \cite{Chopin2012}. 


An important step in the analysis of cSMC will be the analysis of the perturbation of the selection step. For a more precise analysis of this define the Boltzmann-Gibbs transformation $\Psi_{k}$ for a probability measure $\mu$ on $\mathcal{X}^{k+1}$
\[
\Psi_{k} \mu (dx_{0:k}) = \frac{\mu(dx_{0:k}) G_k( x_k)}{\int \mu(dx_{0:k}) G_k( x_k ) } 
\]
Note that this is only defined if $\mu( G_k ) > 0$. 

For the empirical probability measure $\mu^N$ of the form $\mu^N (dx_{0:k}) = \frac{1}{N} \sum\limits_{i=1}^N \delta_{X^i_{0:k}} (d x_{0:k})$ this becomes
\[
\Psi_{k}(\mu^N )(dx_{0:k})=\sum\limits _{i=1}^{N}\frac{G_{k}(X^{i}_k)}{\sum\limits _{i=1}^{N}G_{k}(X^{i}_k)}\delta_{X^{i}_{0:k}} d(x_{0:k})
\]


To simplify the notation call $\delta_{\bar{x}_{0:k}}(dx_{0:k})$ simply $\delta_{k}(dx_{0:k})$. 
The modified Boltzmann-Gibbs transformation used in cSMC is
\[
\hat{\Psi}^N_{k}(\mu)=\Psi_{k}(N\mu + \delta_{k})
\]

Define 
\[
B^{i}_n = i
\]
and  for $k=n-1, \ldots, 0$
\[
B^{i}_k = A_{k}^{B^{i}_{k+1}}
\]

where the $A^{1:N}_{0:n}$ are as in Algorithm \ref{alg:csmcparticles}. Then let 
\begin{equation}\label{eq:empcsmc}
\hat{\eta}_{n}^{N}(dx_{0:n}) =\frac{1}{N} \sum\limits_{i=1}^N \delta_{(X^{B^i_{j}}_j)_{0\leq j \leq n}}
\end{equation}
be the empirical measure of the unconditioned
particle paths at time $k$ in Algorithm \ref{alg:csmcparticles}.

The empirical distribution of the selected particle of Algorithm \ref{alg:csmc} is given by $\hat{\Psi}^N_{n}(\hat{\eta}_{n}^{N})$ and one can write for the Markov kernel defined by Algorithm \ref{alg:csmc}
\begin{equation}\label{eq:markovkernel}
P_{n}^{N}(f_{n}) (\bar{x}_{0:n}) =\mathbb{E}_{\bar{x}_{0:n}} \left(\hat{\Psi}^N_{n}(\hat{\eta}^N_{n})(f_{n})\right)
\end{equation}



\section{Perturbation of the Selection step}\label{sec:perturb}

The first step is to quantify the perturbation induced by using the
modified Boltzmann-Gibbs transformation $\hat{\Psi}^N_{k}$ instead of $\Psi_{k}$. Remember that $\hat{\eta}^N_k$ is the empirical measure of the unconditioned particles created by Algorithm \ref{alg:csmcparticles} as defined in (\ref{eq:empcsmc}).

\begin{lemma}
Let $\varphi_{k}: \mathcal{X}^{k+1} \rightarrow \mathbb{R}$ be a function with $\left\Vert \varphi_{k}\right\Vert _{\infty}\leq1$. Then, 
\begin{align}
\mathbb{E}\vert\Psi_{k}(\hat{\eta}^N_{k})(\varphi_{k})-\hat{\Psi}^N_{k}(\hat{\eta}_{k}^{N})(\varphi_{k})\vert & \leq  \mathbb{E} ( Z_k ) \text{osc}(\varphi_{k})\label{eq:gibbs} 
\end{align}

where 
\begin{equation}\label{eq:Z}
Z_k = 1-\frac{N\hat{\eta}_{k}^{N}(G_{k})}{N\hat{\eta}_{k}^{N}(G_{k})+G_{k}(\bar{x}_{k})}
\end{equation}

\end{lemma}

\begin{proof}
Note that
\begin{align}
\Psi_{k}(\mu)-\hat{\Psi}^N_{k} (\mu) & =\left(1-\frac{\mu(G_{k})}{\mu(G_{k})+G_{k}(\bar{x}_{k})}\right)\Psi_{k}(\mu)-\frac{G_{k}(\bar{x}_{k})}{\mu(G_{k})+G_{k}(\bar{x}_{k})}\delta_{k}\nonumber \\
 & =\frac{G_{k}(\bar{x}_{k})}{\mu(G_{k})+G_{k}(\bar{x}_{k})}\left(\Psi_{k}(\mu)-\delta_{k}\right) \nonumber 
\end{align}

If we now substitute $\mu=\hat{\eta}_{k}^{N}$ this implies
\begin{align}
\mathbb{E}\vert\Psi_{k}(\hat{\eta}^N_{k})(\varphi_{k})-\hat{\Psi}^N_{k}(\hat{\eta}_{k}^{N})(\varphi_{k})\vert & \leq\mathbb{E}{\frac{G_{k}(\bar{x}_{k})}{N\hat{\eta}_{k}^{N}(G_{k})+G_{k}(\bar{x}_{k})}}\text{osc}(\varphi_{k})\nonumber \\
 & = \mathbb{E} ( Z_k ) \text{osc}(\varphi_{k})\label{eq:gibbs} 
\end{align}

\end{proof}


\section{Telescopic decomposition}
As explained in the previous section Algorithm \ref{alg:csmcparticles} can be regarded as a perturbed standard SMC Algorithm. 
Using ideas from \cite{delmoral2004} when analysing SMC methods we will give a telescopic decomposition for $P^N_n(f_n) - \pi_n (f_n ) $, where $P^N_n$ was defined in (\ref{eq:markovkernel}) as the Markov kernel for Algorithm \ref{alg:csmc}. This will be achieved by considering the individual local errors at each time step. Note that in the application of Gibbs sampling $P^N_n(f_n) - \pi_n (f_n ) $ is not of immediate interest, since accuracy of the Algorithm is achieved via stability properties of $(P^N_n )^m$. However, it will be shown that this error also gives a convergence result for repeated iterations of the cSMC kernel. 
Let $\mu$ be  a positive measure on $\mathcal{X}$. Then, define the perturbed forward kernel by 
\[
\hat{\Phi}^N_k \mu (dx_{0:k}) = \hat{\Psi}^N_{k-1} ( \mu ) (dx_{0:k-1}) dx_k q ( x_k \left| x_{k-1} \right )
\] 

\begin{theorem}\label{thm:telescopic}
For all $N\geq 1$, $n\geq 1$ and $f_n:\mathcal{X}^{n+1}\rightarrow \mathbb{R}$ with $\left\| f_n \right\|_{\infty} \leq 1$
\begin{equation}\label{eq:errorbound}
\left|P^N_n (f_n ) - \pi_n (f_n )\right| \leq \mathbb{E}(Z_n ) +  \frac{a_1}{N} \sum\limits_{k=1}^n  \mathbb{E}\left( \tau_{k,n}^2 \right) + \sum\limits_{k=1}^n \mathbb{E} ( \gamma_{k,n} ) 
\end{equation}
with
\[
\tau_{k,n} = \frac{\left\| Q_{k,n} (G_n ) \right\|_{\infty}}{\hat{\Phi}^N_k \hat{\eta}^N_{k-1}  \left( Q_{k,n} (G_n ) \right)}
\]
and
\[
\gamma_{k,n}=\frac{N^{-1}\delta_{k-1}Q_{k-1,n}(G_{n})}{(\hat{\eta}_{k-1}^{N}+N^{-1}\delta_{k-1})Q_{k-1,n}(G_{n})}
\]
\end{theorem}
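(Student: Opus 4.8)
The plan is to read Algorithm~\ref{alg:csmcparticles} as a perturbed Feynman--Kac particle system and to apply the telescopic decomposition of Del Moral to the \emph{normalised} flow. For a measure $\mu$ on $\mathcal X^{k+1}$ write
\[
\mathcal P_{k,n}(\mu)(f_n)=\frac{\mu Q_{k,n}(G_n f_n)}{\mu Q_{k,n}(G_n)},
\]
so that $\mathcal P_{n,n}(\mu)(f_n)=\Psi_n(\mu)(f_n)$ while $\mathcal P_{0,n}(m_0)(f_n)=\pi_n(f_n)$. Two structural identities drive the argument. First, conditionally on the particles up to time $k-1$ (call this information $\mathcal F_{k-1}$), the selection/propagation step generates $N$ i.i.d.\ paths whose common law is $\hat\Phi^N_k\hat\eta^N_{k-1}$: the multinomial weights $w^j_{k-1}$ are exactly the masses of $\hat\Psi^N_{k-1}(\hat\eta^N_{k-1})=\Psi_{k-1}(N\hat\eta^N_{k-1}+\delta_{k-1})$, hence $\mathbb E[\hat\eta^N_k\mid\mathcal F_{k-1}]=\hat\Phi^N_k\hat\eta^N_{k-1}$. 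Second, a one-line computation from the definition of $Q_{k,n}$ gives the semigroup consistency $\mathcal P_{k,n}(\Phi_k\mu)=\mathcal P_{k-1,n}(\mu)$, and scale invariance of the Boltzmann--Gibbs map yields $\hat\Phi^N_k\mu=\Phi_k(\mu+N^{-1}\delta_{k-1})$.

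First I would peel off the final selection step with the Lemma of Section~\ref{sec:perturb}: writing $P^N_n(f_n)-\pi_n(f_n)=\mathbb E[\hat\Psi^N_n(\hat\eta^N_n)(f_n)-\Psi_n(\hat\eta^N_n)(f_n)]+(\mathbb E[\mathcal P_{n,n}(\hat\eta^N_n)(f_n)]-\mathcal P_{0,n}(m_0)(f_n))$, the first term is bounded by $\mathbb E(Z_n)$ (the oscillation factor being absorbed). For the second I telescope along $k=1,\dots,n$,
\[
\mathcal P_{n,n}(\hat\eta^N_n)(f_n)-\mathcal P_{0,n}(\hat\eta^N_0)(f_n)=\sum_{k=1}^n\left[\mathcal P_{k,n}(\hat\eta^N_k)(f_n)-\mathcal P_{k-1,n}(\hat\eta^N_{k-1})(f_n)\right],
\]
and split each summand, using $\mathcal P_{k-1,n}(\hat\eta^N_{k-1})=\mathcal P_{k,n}(\Phi_k\hat\eta^N_{k-1})$, as
\begin{align*}
&\mathcal P_{k,n}(\hat\eta^N_k)(f_n)-\mathcal P_{k-1,n}(\hat\eta^N_{k-1})(f_n)\\
&=\big[\mathcal P_{k,n}(\hat\eta^N_k)-\mathcal P_{k,n}(\hat\Phi^N_k\hat\eta^N_{k-1})\big](f_n)+\big[\mathcal P_{k,n}(\hat\Phi^N_k\hat\eta^N_{k-1})-\mathcal P_{k,n}(\Phi_k\hat\eta^N_{k-1})\big](f_n).
\end{align*}
The residual initial term $\mathcal P_{0,n}(\hat\eta^N_0)(f_n)-\mathcal P_{0,n}(m_0)(f_n)$ is of the same ratio type as the sampling error treated below and is handled analogously.

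The second summand is the \emph{perturbation error}. By $\hat\Phi^N_k\mu=\Phi_k(\mu+N^{-1}\delta_{k-1})$ and consistency it equals $\mathcal P_{k-1,n}(\hat\eta^N_{k-1}+N^{-1}\delta_{k-1})(f_n)-\mathcal P_{k-1,n}(\hat\eta^N_{k-1})(f_n)$; expanding these two ratios over the common denominator $(\hat\eta^N_{k-1}+N^{-1}\delta_{k-1})Q_{k-1,n}(G_n)$ and using $|Q_{k-1,n}(G_nf_n)|\le Q_{k-1,n}(G_n)$ bounds it, up to a universal constant, by $\gamma_{k,n}$. The first summand is the \emph{sampling error}. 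Setting $g=Q_{k,n}(G_n)$, $r=P_{k,n}(f_n)$, $\bar\nu=\hat\Phi^N_k\hat\eta^N_{k-1}$ and $\bar r=\mathcal P_{k,n}(\bar\nu)(f_n)$, it equals
\[
\frac{\hat\eta^N_k\big(g\,(r-\bar r)\big)}{\hat\eta^N_k(g)},
\]
whose numerator has conditional mean zero because $\bar\nu(g(r-\bar r))=0$. Taking the conditional expectation kills the first-order fluctuation, and the residual $O(1/N)$ bias is governed by the second moment of the normalised weight $g/\bar\nu(g)$, whose supremum is precisely $\tau_{k,n}$; summing over $k$ and taking expectations produces the $\tfrac{a_1}{N}\sum_k\mathbb E(\tau_{k,n}^2)$ term.

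The hard part will be this last, sampling-error, bound. Because the random normalising constant $\hat\eta^N_k(g)$ appears in the denominator and admits no deterministic lower bound, one cannot take expectations naively. The crux is to isolate the mean-zero first-order term and to control the second-order remainder --- in particular the inverse of the random denominator --- by replacing $\hat\eta^N_k(g)$ with its conditional mean $\bar\nu(g)=\hat\Phi^N_k\hat\eta^N_{k-1}(Q_{k,n}(G_n))$; it is exactly this replacement that costs the factor $\tau_{k,n}$ twice and so yields $\tau_{k,n}^2$ rather than a single power. By comparison the $\mathbb E(Z_n)$ term (the Lemma) and the $\gamma_{k,n}$ term (exact semigroup algebra) are routine.
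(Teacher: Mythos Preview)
Your proposal is correct and follows essentially the same route as the paper: peel off the final Boltzmann--Gibbs perturbation via the Lemma to get $\mathbb{E}(Z_n)$, telescope $\mathcal P_{n,n}(\hat\eta^N_n)-\pi_n$ along $k$, and split each increment into a sampling error $\mathcal P_{k,n}(\hat\eta^N_k)-\mathcal P_{k,n}(\hat\Phi^N_k\hat\eta^N_{k-1})$ and a perturbation error $\mathcal P_{k,n}(\hat\Phi^N_k\hat\eta^N_{k-1})-\mathcal P_{k,n}(\Phi_k\hat\eta^N_{k-1})$. The paper writes the sampling error as $\frac{(\hat\Phi^N_k\hat\eta^N_{k-1})Q_{k,n}(G_n)}{\hat\eta^N_k Q_{k,n}(G_n)}\,V^N_k(d_{\hat\Phi^N_k\hat\eta^N_{k-1},k,n})$ with $V^N_k=\hat\eta^N_k-\hat\Phi^N_k\hat\eta^N_{k-1}$, which is algebraically identical to your $\hat\eta^N_k(g(r-\bar r))/\hat\eta^N_k(g)$; it then subtracts the conditional mean of $V^N_k$ (zero) to produce the centred product $(\frac{\bar\nu(g)}{\hat\eta^N_k(g)}-1)V^N_k(d)$ and invokes Del Moral's standard $L^2$ estimates to get $a_1\tau_{k,n}^2/N$, exactly as you outline. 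Your explicit mention of the $k=0$ initial sampling term $\mathcal P_{0,n}(\hat\eta^N_0)-\mathcal P_{0,n}(m_0)$ is a point the paper glosses over via a convention on $\hat\eta^N_{-1}$, but the treatment is the same.
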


\begin{proof}

Consider 

\begin{equation}
\left|P_{n}^{N}(f_{n})-\eta_{n}(f_{n})\right|
\leq\left|\mathbb{E}\left(\hat{\Psi}^N_{n}\hat{\eta}^N_{n}(f_{n})-\Psi_{n}\hat{\eta}_{n}^{N}(f_{n})\right)\right|+
\left|\mathbb{E}\left(\Psi_{n}\hat{\eta}_{n}^{N}(f_{n})\right)-\Psi_{n}\eta_{n}(f_{n})\right|\label{eq:first}
\end{equation}

The first part of this sum represents the perturbation of the Boltzmann-Gibbs measure at time $n$ and is bounded via (\ref{eq:gibbs})

For the second part of the sum in (\ref{eq:first}) we use a telescopic
decomposition. The idea is to split the  local error at each time step into one part which comes from the particle approximation error and can be analyzed using standard techniques from SMC theory and a second part capturing the distance between the perturbed and the unperturbed particle system.  Let 
\[
\mathcal{F}_{k}=\sigma\left\{ X_{1:k}^{1:N},A_{1:k-1}\right\} 
\]
Define
\[
V_{k}^{N}(dx_{0:k}) = \left(\hat{\eta}_{k}^{N}-\hat{\Phi}_{k}^N \hat{\eta}_{k-1}^{N} \right) (dx_{0:k})
\]

Let $\nu$ be a probability measure on $\mathcal{X}$. Define
\[
d_{\nu,k,n}(x_{0:k})=\frac{Q_{k,n}(G_{n})(x_k)}{\nu Q_{k,n}(G_{n})}P_{k,n}\left(f_{n}-  \frac{ \nu Q_{k,n} ( G_n f_n )}{\nu Q_{k,n} (G_n )}  \right)(x_{0:k})
\]

and note that $\nu( d_{\nu,k,n} ) = 0$. Write $d_{p,n} (f) (x) = d_{\eta_p, p, n} (f) (x) $ and use the conventions $\prod \emptyset = 1$ and $Q_{n,n} = Id$ and $\hat{\eta}^N_{-1} Q_{-1,n} = \eta_0 Q_{0,n} $.

Write 
\begin{align}
\Psi_n\ehb_{n}(\fb_{n}) &-\Psi_n\eta_{n}(\fb_{n}) \\
&=\sum\limits _{k=1}^{n}  \frac{(\hat{\eta}_{k}^{N})Q_{k,n}(G_{n} f_n)}{(\hat{\eta}_{k}^{N})Q_{k,n}(G_{n})} - \frac{(\hat{\Phi}^N_{k}\hat{\eta}_{k-1}^{N})Q_{k,n}(G_{n} f_n)}{(\hat{\Phi}^N_{k}\hat{\eta}_{k-1}^{N})Q_{k,n}(G_{n})}\\
&= \sum\limits _{k=1}^{n}\frac{(\hat{\Phi}^N_{k}\hat{\eta}_{k-1}^{N})Q_{k,n}(G_{n})}{\hat{\eta}_{k}^{N}Q_{k,n}(G_{n})}V_{k}^{N}(d_{(\hat{\Phi}^N_{k}\hat{\eta}_{k-1}^{N}),k,n})+R_{k}^{N} \label{eq:decomposition}
\end{align}
with the residual term
\begin{align}
R_{k}^{N} & =\left(\Psi_{n}\Phi_{k,n}\hat{\Phi}_{k}\hat{\eta}_{k-1}^{N}\right)(f_{n})-\left(\Psi_{n}\Phi_{k,n}\Phi_{k}\hat{\eta}^N_{k-1}\right)(f_{n})\nonumber \\
 & =\frac{(\hat{\eta}_{k-1}^{N}+N^{-1}\delta_{k-1})Q_{k-1,n}(G_{n}f_{n})}{(\hat{\eta}_{k-1}^{N}+N^{-1}\delta_{k-1})Q_{k-1,n}(G_{n})}-\frac{\hat{\eta}_{k-1}^{N}Q_{k-1,n}(G_{n}f_{n})}{\hat{\eta}_{k-1}^{N}Q_{k-1,n}(G_{n})}\nonumber \\
& =\left(\frac{\hat{\eta}_{k-1}^{N}Q_{k-1,n}(G_{n})}{(\hat{\eta}_{k-1}^{N}+N^{-1}\delta_{k-1})Q_{k-1,n}(G_{n})}-1\right)\frac{\hat{\eta}_{k-1}^{N}Q_{k-1,n}(G_{n}f_{n})}{\hat{\eta}_{k-1}^{N}Q_{k-1,n}(G_{n})}\nonumber \\
 & \qquad+\frac{N^{-1}\delta_{k-1}Q_{k-1,n}(G_{n})}{(\hat{\eta}_{k-1}^{N}+N^{-1}\delta_{k-1})Q_{k-1,n}(G_{n})}\frac{Q_{k-1,n}(G_{n}f_{n})(\bar{x}_{k-1})}{Q_{k-1,n}(G_{n})(\bar{x}_{k-1})}\nonumber \\ 
&= \frac{N^{-1}\delta_{k-1}Q_{k-1,n}(G_{n})}{(\hat{\eta}_{k-1}^{N}+N^{-1}\delta_{k-1})Q_{k-1,n}(G_{n})}
\left(     \frac{Q_{k-1,n}(G_{n}f_{n})(\bar{x}_{k-1})}{Q_{k-1,n}(G_{n}) (\bar{x}_{k-1})} - \frac{\hat{\eta}_{k-1}^{N}Q_{k-1,n}(G_{n}f_{n})}{\hat{\eta}_{k-1}^{N}Q_{k-1,n}(G_{n})}     \right) 
\end{align}
Hence, 
\begin{align}
\left | R^N_k \right|  &\leq\frac{N^{-1}\delta_{k-1}Q_{k-1,n}(G_{n})}{(\hat{\eta}_{k-1}^{N}+N^{-1}\delta_{k-1})Q_{k-1,n}(G_{n})}{\text{osc}}\left(\frac{Q_{k-1,n}(G_{n}f_{n})}{Q_{k-1,n}(G_{n})}\right) \nonumber \\
&\leq \gamma_{k,n} \label{eq:residualbound}
\end{align}

To analyse the first part of the sum (\ref{eq:decomposition}) note that in the filtration $\mathcal{F}_{k-1}$ the random variables $X_{k}^1, \ldots, X^N_k$ can be viewed as i.i.d. samples from $\hat{\Phi}^N_k \hat{\eta}_{k-1}^N$.  This implies that $\mathbb{E}\left( V_k^N \left| \mathcal{F}_{k-1} \right.  \right) = 0$ and we can apply standard results from \cite{delmoral2004} to control the particle approximation error. 

\begin{align*}
\mathbb{E} & \left( \frac{(\hat{\Phi}_{k}\hat{\eta}_{k-1}^{N})Q_{k,n}(G_{n})}{\hat{\eta}_{k}^{N}Q_{k,n}(G_{n})}V_{k}^{N}(d_{(\hat{\Phi}_{k}\hat{\eta}_{k-1}^{N}),k,n}) \right) \\
&=  \mathbb{E} \left( \left( \frac{(\hat{\Phi}_{k}\hat{\eta}_{k-1}^{N})Q_{k,n}(G_{n})}{\hat{\eta}_{k}^{N}Q_{k,n}(G_{n})} -1 \right)V_{k}^{N}(d_{(\hat{\Phi}_{k}\hat{\eta}_{k-1}^{N}),k,n}) \right)
\end{align*}

This allows to conclude that there exists a constant $a_1>0$ such that
\begin{equation}\label{eq:particle}
\left| \mathbb{E} \left(   \frac{(\hat{\Phi}_{k}\hat{\eta}_{k-1}^{N})Q_{k,n}(G_{n})}{\hat{\eta}_{k}^{N}Q_{k,n}(G_{n})}V_{k}^{N}(d_{(\hat{\Phi}_{k}\hat{\eta}_{k-1}^{N}),k,n}) \left| \mathcal{F}_{k-1} \right.    \right) \right| \leq \frac{1}{N} a_1 \frac{\left\| Q_{k,n} (G_n ) \right\|_{\infty}^2}{\hat{\Phi}^N_{k} \hat{\eta}^N_{k-1}  \left( Q_{k,n} (G_n ) \right)^2 }
\end{equation}

Using (\ref{eq:gibbs}), (\ref{eq:first}), (\ref{eq:decomposition}), (\ref{eq:particle}) and (\ref{eq:residualbound}) the  Theorem follows.

\end{proof}

\section{Explicit error bounds and uniform ergodicity}

Define for $0\leq k \leq n-1 $ 
\[
p_{k+1,n} (x_{k} ) = \int  \prod\limits_{j=k+1}^n dx_{j} q(x_j \left| x_{j-1}   \right ) G_j (x_j ) 
\]
In the framework of filtering for HMMs this corresponds to the probability density of the observations $y_{k+1:n}$ given $x_k$. 

In order to obtain explicit error bounds in (\ref{eq:errorbound}) we make the following assumptions
\begin{itemize}
\item[(A1)] There exists a constant $\underline{g}$ such that for all $0\leq k <n$ 
\begin{equation}\label{eq:condition}
\underline{g}^{-1}\geq \sup\limits_{x_{k}, x^\prime_k \in \mathcal{X}} \frac{ p_{k+1:n} (  x_k   )}{p_{k+1:n}(  x_k^\prime  )}\geq\underline{g}
\end{equation}
\item[(A2)] There exists a constant $a>0$ such that 
\[
\sup_{n\geq 0} \sup\limits_{x} G_n (x ) \leq a 
\] 
\end{itemize}

(A1) and (A2) implies that there exists a $c>0$ such that 
 for all $n>0$
\[
\int dx^\prime q(x^\prime \left| x \right. ) G_n (x^\prime) > c^{-1}
\]

To prove this, assume that the statement is false. Then there exists an increasing subsequence $( n_k )_{k\geq 0 }$ and a sequence $(x_{n_k} )_{k \geq 0}$ such that \[\int dx^\prime q(x^\prime \left| x_{n_{k}} \right. ) G_{n_k} (x^\prime) \rightarrow 0\] as $k \rightarrow \infty$. Take an arbitrary $\tilde{x} \in \mathcal{X}$.  Then by (A2) for all $k\geq 0$ it holds that $\int dx^\prime q(x^\prime \left| \tilde{x}  \right. ) G_{n_k} (x^\prime) \leq a$ and therefore 
$\frac{\int dx^\prime q(x^\prime \left| \tilde{x} \right. ) G_{n_k} (x^\prime)}{\int dx^\prime q(x^\prime \left| x_{n_{k}} \right. ) G_{n_k} (x^\prime) } \rightarrow \infty$ as $k \rightarrow \infty$, which contradicts (A1) for $k=n-1$.

These are standard assumptions in the analysis of SMC methods, see for example \cite{delmoral2004}, \cite{delmoral:hal-00932211}. They have also been applied to the analysis of maximum likelihood methods, see \cite{Douc2004} and \cite{douc2012}. In the context of filtering for HMMs one can show (A1) using 
lemma 2 in \cite{Douc2004} or proposition 8 in \cite{douc2012}.

\begin{theorem}\label{thm:totalerrorexplicit}
Under (A1),(A2), it holds that for all $N\geq 1$, $n\geq 1$ and $f_n:\mathcal{X}^{n+1}\rightarrow \mathbb{R}$ with $\left\| f_n \right\|_{\infty} \leq 1$

\[
\left|P_{n}^{N}(f_n )-\pi_{n}(f_n )\right|\leq 
\frac{2 (n+1) a c\underline{g}^{-1} - 1}{N-1+2ac\underline{g}^{-1}} + \frac{n (a c \underline{g}^{-1})^2 }{N} 
\]
\end{theorem}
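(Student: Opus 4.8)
The plan is to start from the telescopic bound (\ref{eq:errorbound}) of Theorem~\ref{thm:telescopic} and to control its three pieces $\mathbb{E}(Z_n)$, $\frac{a_1}{N}\sum_k\mathbb{E}(\tau_{k,n}^2)$ and $\sum_k\mathbb{E}(\gamma_{k,n})$ separately. The assumptions will enter only through two elementary consequences: the factorisation $Q_{k,n}(G_n)(x_{0:k})=G_k(x_k)\,p_{k+1,n}(x_k)$ (so this kernel sees the path only through its last coordinate), and the mean lower bound $\int q(x'|x)G_m(x')\,dx'>c^{-1}$ established just below (A1)--(A2).

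First I would record, alongside the factorisation, the identities $\hat{\Phi}^N_k\hat{\eta}^N_{k-1}(Q_{k,n}(G_n))=\hat{\Psi}^N_{k-1}(\hat{\eta}^N_{k-1})(p_{k,n})$ and the recursion $p_{k,n}(x_{k-1})=\int q(x|x_{k-1})G_k(x)p_{k+1,n}(x)\,dx$. Bounding the numerator of $\tau_{k,n}$ by $a\,\Vert p_{k+1,n}\Vert_\infty$ via (A2) and the denominator below by $\inf p_{k,n}$, then inserting $p_{k+1,n}(x)\ge\underline{g}\,\Vert p_{k+1,n}\Vert_\infty$ from (A1) together with $\int q(x|x_{k-1})G_k(x)\,dx>c^{-1}$ into the recursion, yields the pathwise estimate $\tau_{k,n}\le ac\underline{g}^{-1}$. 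This disposes of the middle term: $\frac{a_1}{N}\sum_{k=1}^n\mathbb{E}(\tau_{k,n}^2)\le\frac{n(ac\underline{g}^{-1})^2}{N}$ on taking $a_1=1$.

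The two remaining terms share a structure. Writing $h=Q_{k-1,n}(G_n)=G_{k-1}p_{k,n}$ one checks $\gamma_{k,n}=\frac{h(\bar{x}_{0:k-1})}{\sum_{i=1}^N h(X^i_{0:k-1})+h(\bar{x}_{0:k-1})}$, and cancelling the $p_{k,n}$ factors using (A1) ($p_{k,n}(x)\ge\underline{g}\,p_{k,n}(\bar{x}_{k-1})$) and (A2) ($G_{k-1}(\bar{x}_{k-1})\le a$) reduces it to $\gamma_{k,n}\le\frac{a\underline{g}^{-1}}{N\hat{\eta}^N_{k-1}(G_{k-1})+a\underline{g}^{-1}}$; likewise $Z_n\le\frac{a}{N\hat{\eta}^N_n(G_n)+a}$ directly from its definition. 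Both are thus of the form $\frac{\beta}{N\hat{\eta}^N_m(G_m)+\beta}=\frac{\beta}{\sum_{i=1}^N G_m(X^i_m)+\beta}$, with $\beta=a\underline{g}^{-1}$ and $\beta=a$ respectively.

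The heart of the matter is therefore the expectation bound
\[
\mathbb{E}\left(\frac{\beta}{\sum_{i=1}^N G_m(X^i_m)+\beta}\right)\le\frac{2\beta c}{N-1+2\beta c}.
\]
I would prove it by conditioning on $\mathcal{F}_{m-1}$, under which $G_m(X^1_m),\dots,G_m(X^N_m)$ are independent, take values in $[0,a]$, and have conditional means $\int q(x|X^{A^i_{m-1}}_{m-1})G_m(x)\,dx>c^{-1}$. This is precisely where care is needed: the map $s\mapsto\beta/(s+\beta)$ is convex and decreasing, so Jensen runs the wrong way and the mean lower bound cannot be used naively. Instead I would maximise the expectation over all laws of the $g_i$ with the prescribed range and mean, which concentrates each $g_i$ on $\{0,a\}$ and reduces the problem to estimating $\mathbb{E}(\beta/(aB+\beta))$ for a Binomial$(N,1/(ac))$ variable $B$; a standard bound for the negative moments of a binomial then gives the displayed inequality. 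Substituting $\beta=a\underline{g}^{-1}$ (so $\beta c=ac\underline{g}^{-1}$) and $\beta=a$, using $\underline{g}\le1$ to dominate the $Z_n$ contribution by a $\gamma_{k,n}$ one, and summing the $n+1$ resulting terms produces $\frac{2(n+1)ac\underline{g}^{-1}}{N-1+2ac\underline{g}^{-1}}$; the sharper $-1$ in the numerator comes from retaining the exact $Z_n$ bound rather than dominating it. I expect this expectation estimate, and not the already-completed telescopic decomposition, to be the main obstacle.
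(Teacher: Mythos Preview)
Your treatment of $\tau_{k,n}$ matches the paper's exactly: factorise $Q_{k,n}(G_n)=G_k\,p_{k+1,n}$, bound the numerator by $a\|p_{k+1,n}\|_\infty$, and use the recursion for $p_{k,n}$ together with (A1) and the lower bound $\int qG_k>c^{-1}$ to get $\tau_{k,n}\le ac\underline{g}^{-1}$ pathwise.

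For $\gamma_{k,n}$ and $Z_n$ you take a genuinely different route. The paper neither performs your preliminary pathwise reduction to $\frac{\beta}{N\hat\eta^N_m(G_m)+\beta}$ nor your extremal--binomial argument. Instead it works directly with the complement $\frac{N\hat{\eta}^N_{k}Q_{k,n}(G_n)}{(N\hat{\eta}^N_{k}+\delta_{k})Q_{k,n}(G_n)}$ and observes that, conditionally on $\mathcal{F}_{k-1}$ and on a single particle $X^i_k$, the $i$-th summand $\frac{Q_{k,n}(G_n)(X^i_k)}{(N\hat\eta^N_k+\delta_k)Q_{k,n}(G_n)}$ is a convex function of the remaining sum $\sum_{j\ne i}Q_{k,n}(G_n)(X^j_k)$. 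Jensen therefore runs the \emph{right} way on the complement; after taking expectation over $X^i_k$, summing over $i$, and using the crude bound $Q_{k,n}(G_n)(X^i_k)+Q_{k,n}(G_n)(\bar x_k)\le 2\|Q_{k,n}(G_n)\|_\infty$, one gets $\mathbb{E}\bigl(\frac{N\hat{\eta}^N_{k}Q_{k,n}(G_n)}{(N\hat{\eta}^N_{k}+\delta_{k})Q_{k,n}(G_n)}\,\big|\,\mathcal{F}_{k-1}\bigr)\ge\frac{N}{N-1+2\tau_{k,n}}$, hence $\mathbb{E}(\gamma_{k,n})\le\frac{2ac\underline{g}^{-1}-1}{N-1+2ac\underline{g}^{-1}}$, and the same bound for $Z_n$. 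This freeze-one-particle trick is precisely the device you declared unavailable, and it replaces your extremal reduction plus binomial-moment estimate by a two-line application of Jensen.

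Your alternative is sound in principle: the pathwise reduction of $\gamma_{k,n}$ is correct, and the convex-order argument concentrating each $G_m(X^i_m)$ on $\{0,a\}$ is valid. But the final claim---that a ``standard'' negative-moment bound for $\mathbb{E}[\beta/(aB+\beta)]$ with $B\sim\mathrm{Bin}(N,1/(ac))$ returns exactly $\frac{2\beta c}{N-1+2\beta c}$---is not immediate; the classical identity $\mathbb{E}[1/(B+1)]=\frac{1-(1-p)^{N+1}}{(N+1)p}$ gives a different (asymptotically equivalent) constant, and for $\beta\ne a$ there is no closed form. You would recover the $O(n/N)$ order this way, but not the precise numerator and denominator in the statement; the paper's approach delivers those constants directly.
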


 \begin{proof}
Note that under assumption \ref{eq:condition} it  holds that
\begin{align*}
\frac{p_{k,n} ( x_{k-1}   )}{\sup\limits_{x_k} p_{k+1:n} ( x_k   )} &= \int dx_{k} q(x_k \left| x_{k-1} \right ) G(x_k ) \frac{p_{k+1:n} ( x_k   )}{{\sup\limits_{x_k} p_{k+1:n} ( x_k   )}} \\
&> c^{-1} \underline{g}
\end{align*}

Hence,
 
\begin{align}
\tau_{k,n} &=  
\frac{\left\| Q_{k,n} (G_n ) \right\|_{\infty}}{\hat{\Phi}^N_k \hat{\eta}^N_{k-1}  \left( Q_{k,n} (G_n ) \right)}  \nonumber \\
&\leq  \dfrac{a \sup\limits_{x_k}  p _{k+1:n} ( x_k  ) }{ \hat{\Psi}^N_{k-1} \hat{\eta}^N_{k-1} (dx_{k-1}) p_{k:n} ( x_{k-1}   )  } \nonumber \\
&\leq a c \underline{g}^{-1} \label{eq:ineqtau}
\end{align}

In order to show that $\frac{N^{-1}\delta_{k-1}Q_{k-1,n}(G_{n})}{(\hat{\eta}_{k-1}^{N}+N^{-1}\delta_{k-1})Q_{k-1,n}(G_{n})}$ is of order $\frac{1}{N}$
we study 1 minus this quantity. First, note that $X_k^1, \ldots, X^N_k$ are i.i.d. under $\mathcal{F}_{k-1}$. 
Using Jensen's inequality it follows that for $1\leq i \leq N$

\[
\begin{alignedat}{1}\mathbb{E} & \left\{ \frac{Q_{k,n}(G_{n})(X_{k}^{i})}{(N\hat{\eta}_{k}^{N}+\delta_{k})Q_{k,n}(G_{n})}\vert\mathcal{F}_{k-1},X_{k}^{i}\right\} \\  
&\geq\frac{Q_{k,n}(G_{n})(X_{k}^{i})}{Q_{k,n}(G_{n})(X_{k}^{i})+\sum_{j=1,j\neq i}^{N}\mathbb{E}\left\{ Q_{k,n}(G_{n})(X_{k}^{j})\vert\mathcal{F}_{k-1},X_{k}^{i}\right\} + Q_{k,n}(G_{n})( \bar{x}_k )}\\
\\
 & =\frac{Q_{k,n}(G_{n})(X_{k}^{i})}{Q_{k,n}(G_{n})(X_{k}^{i})+(N-1)\hat{\Phi}_{k}(\hat{\eta}_{k-1}^{N})Q_{k,n}(G_{n})+Q_{k,n}(G_{n})(\bar{x}_k)}\\
 & \geq\frac{Q_{k,n}(G_{n})(X_{k}^{i})}{(N-1)\hat{\Phi}_{k}(\hat{\eta}_{k-1}^{N})Q_{k,n}(G_{n})+2\sup_{x_{k}\in \mathcal{X}}Q_{k,n}(G_{n})(x_{k})}\\
\\
\end{alignedat}
\]

Thus 
\begin{align*}
\mathbb{E} & \left\{ \frac{N\hat{\eta}_{k}^{N}Q_{k,n}(G_{n})(X_{k}^{i})}{(N\hat{\eta}_{k}^{N}+\delta_{k})Q_{k,n}(G_{n})} \left| \mathcal{F}_{k-1} \right. \right\} \\  & \geq
\frac{N}{N-1}\mathbb{E}\left\{ \frac{\hat{\Phi}^N_k  \hat{\eta}_{k-1}^{N} Q_{k,n}(G_{n})}{\hat{\Phi}^N_k  \hat{\eta}_{k-1}^{N} Q_{k,n}(G_{n})+\frac{2}{N-1}\sup_{x_{k}} Q_{k:n} ( G_n ) (  x_{k}  )}\right\} 
\end{align*}


Dividing the numerator and denominator of the previous fraction by $ \hat{\Phi}^N_k  \hat{\eta}_{k-1}^{N} Q_{k,n}(G_{n}) $ leads to
 
\[
\mathbb{E}\left( \frac{N\hat{\eta}_{k}^{N}Q_{k,n}(G_{n})(X_{k}^{i})}{(N\hat{\eta}_{k}^{N}+\delta_{k})Q_{k,n}(G_{n})} \right) 
\geq \frac{N}{N-1} \frac{1}{1+\frac{2}{N-1} \tau_{k,n}}
\]



Using (\ref{eq:ineqtau}) to bound $\tau_{k,n}$ it follows that
\begin{align*}
\frac{N^{-1}\delta_{k-1}Q_{k-1,n}(G_{n})}{(\hat{\eta}_{k-1}^{N}+N^{-1}\delta_{k-1})Q_{k-1,n}(G_{n})} &\leq 1 - \frac{N}{N-1} \frac{1}{1+\frac{2}{N-1} a c \underline{g}^{-1}} \\
&= \frac{2ac\underline{g}^{-1} - 1}{N-1+2a c \underline{g}^{-1}}  
\end{align*}

By the definition of $Z_n$ this also directly implies $\mathbb{E} (Z_n ) \leq \frac{2ac\underline{g}^{-1} - 1}{N-1+2a c \underline{g}^{-1}}  $. 

\end{proof}

In particular, this means that the error is of order $O \left( \frac{n}{N} \right)$. The Theorem can be used to obtain an ergodicity result for the cSMC kernel. Using the invariance property of the kernel it follows that

\begin{align*}
( (P^N_n)^m - \pi_n ) (f_n) &= (P^N_n)^{m-1}  ( P^N_n (f_n ) - \pi_n (f_n ) )\\
&= ( (P^N_n)^{m-1} - \pi_n ) ( P^N_n (f_n) - \pi_n (f_n) )\\
&\vdots \\
&= ( P^N_n (f_n ) - \pi_n (f_n) )^m 
\end{align*}

via induction. Thus,
\begin{corollary}\label{prop:stable}
Under (A1),(A2) it holds that for all $N\geq 1$, $n\geq 1$ and $f_n:\mathcal{X}^{n+1}\rightarrow \mathbb{R}$ with $\left\| f_n \right\|_{\infty} \leq 1$
\[
\left| ( (P^N_n)^m - \pi_n ) (f_n) \right| \leq \left( \frac{2 (n+1) a c \underline{g}^{-1} - 1}{N-1+2a c \underline{g}^{-1}} + \frac{n (a c \underline{g}^{-2})^2 }{N} \right)^m
\]
And by choosing $N_n=C (n+1)+1$ this error can be stabilized via
\[
\left| ( (P^{N_n}_n)^m - \pi_n ) (f_n) \right| \leq \frac{ a c \underline{g}^{-1} (2+ a c \underline{g}^{-1}  )}{C}
\]

\end{corollary}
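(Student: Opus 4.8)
The plan is to deduce both displays from the single-step estimate already established in Theorem \ref{thm:totalerrorexplicit}, exploiting that this estimate is uniform in the conditioning path $\bar{x}_{0:n}$ together with the invariance $\pi_n P^N_n = \pi_n$.

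First I would set $\varepsilon_n(N) = \frac{2(n+1)ac\underline{g}^{-1}-1}{N-1+2ac\underline{g}^{-1}} + \frac{n(ac\underline{g}^{-1})^2}{N}$, so that Theorem \ref{thm:totalerrorexplicit} reads $\left\| P^N_n(f_n) - \pi_n(f_n)\right\|_\infty \le \varepsilon_n(N)$ for every $f_n$ with $\|f_n\|_\infty \le 1$, the sup norm being taken over the starting path. Since $P^N_n - \pi_n$ is linear and $\pi_n(f_n)$ is a constant function, this extends by homogeneity to $\left\| P^N_n(h) - \pi_n(h)\right\|_\infty \le \varepsilon_n(N)\,\|h\|_\infty$ for arbitrary bounded $h$; this is the crucial point that lets the single-step estimate be iterated. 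Writing $h_1 = P^N_n(f_n) - \pi_n(f_n)$, invariance gives $\pi_n(h_1) = \pi_n(f_n) - \pi_n(f_n) = 0$, so $P^N_n(h_1) - \pi_n(h_1) = P^N_n(h_1)$ again has $\pi_n$-mean zero and sup norm at most $\varepsilon_n(N)\|h_1\|_\infty \le \varepsilon_n(N)^2$. Iterating this, where each application of $P^N_n$ preserves the mean-zero property (by invariance) and contracts the sup norm by $\varepsilon_n(N)$, along the telescoping already displayed before the corollary yields $\left|\big((P^N_n)^m - \pi_n\big)(f_n)\right| = \big|(P^N_n)^{m-1}(h_1)\big| \le \varepsilon_n(N)^m$, which is the first claimed bound. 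It holds for every $N$ and is informative precisely when $\varepsilon_n(N) < 1$.

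For the second display I would substitute $N = N_n = C(n+1)+1$ and bound the two terms of $\varepsilon_n(N_n)$ separately. The denominator of the first term becomes $C(n+1) + 2ac\underline{g}^{-1} \ge C(n+1)$ while its numerator is at most $2(n+1)ac\underline{g}^{-1}$, so the first term is at most $\frac{2ac\underline{g}^{-1}}{C}$; for the second term, $\frac{n(ac\underline{g}^{-1})^2}{C(n+1)+1} \le \frac{(ac\underline{g}^{-1})^2}{C}$ since $n \le n+1$. Adding gives $\varepsilon_n(N_n) \le \frac{ac\underline{g}^{-1}(2 + ac\underline{g}^{-1})}{C}$, uniformly in $n$. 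Finally, for $C$ large enough that this constant is at most $1$ (the regime in which the estimate is meaningful), raising to the power $m$ can only decrease it, so $\varepsilon_n(N_n)^m \le \varepsilon_n(N_n) \le \frac{ac\underline{g}^{-1}(2+ac\underline{g}^{-1})}{C}$, which is the second display.

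The routine part is the term-by-term bounding in the last paragraph. The step needing care is the iteration in the middle paragraph: I must ensure (i) that Theorem \ref{thm:totalerrorexplicit} is genuinely uniform in $\bar{x}_{0:n}$, so that it functions as an operator-norm bound on the $\pi_n$-centred subspace, and (ii) that invariance $\pi_n P^N_n = \pi_n$ is invoked at each step to keep the iterate centred, since otherwise the single-step constant $\varepsilon_n(N)$ could not simply be compounded into $\varepsilon_n(N)^m$. I would also read the apparent typo $\underline{g}^{-2}$ in the first display as $\underline{g}^{-1}$, in agreement with Theorem \ref{thm:totalerrorexplicit}.
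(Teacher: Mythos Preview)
Your argument is correct and follows the same route as the paper: iterate the single-step bound of Theorem \ref{thm:totalerrorexplicit} using invariance $\pi_n P^N_n = \pi_n$ to keep the successive functions centred, compounding the constant to $\varepsilon_n(N)^m$, then substitute $N_n = C(n+1)+1$ for the second display. You have in fact supplied more detail than the paper, which records only the inductive identity leading to the first bound and leaves the plug-in computation for the second display implicit; your observation about the $\underline{g}^{-2}$ typo is also correct.
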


\section{Central limit theorem}
It is well known that SMC satisfies a Central Limit Theorem (CLT) under mild assumptions, see \cite{delmoral2004}. In the following a telescopic decomposition is used to obtain a similar result for cSMC. We will write $\overset{d}{\rightarrow}$ for convergence in distribution and  $\overset{p}{\rightarrow}$ for convergence in probability.  

\begin{theorem}\label{thm:clt}
Suppose that $n \geq 0$ and that for all $0 \leq k < n$ it holds that $\left\|G_k \right\|_\infty < \infty$. Then, for any measurable $f_n : \mathcal{X}^{n+1} \rightarrow \mathbb{R}$ with $\left\| f_n \right\|_\infty < \infty$ and any conditional path $\bar{x}_{0:n}$
\[
\sqrt{N} ( \hat{\eta}^N_n (f_n ) - \eta_n (f_n ) ) \overset{d}{\rightarrow}  \mathcal{N} (0, \sigma_n^2   (f_n ) )
\]
as $N\rightarrow \infty$, where
\[
\sigma_n^2 (f_n ) = \eta_n \left( ( f_n - \eta( f_n ) )^2 \right) + \sum_{k=0}^{n-1} \eta_k \left(    \left( \frac{Q_{k,n}(f_n)}{\eta_k Q_{k,n}(1) } - \frac{\eta_k (Q_{k,n} f_n )}{\eta_k Q_{k,n}(1) }     \right)^2  \right)
\]
\end{theorem}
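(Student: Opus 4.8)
The plan is to reproduce the telescopic/martingale analysis underlying the classical SMC central limit theorem of \cite{delmoral2004}, adapted to the perturbed system, and to show that the conditioning path contributes nothing at the $\sqrt{N}$ scale. First I would decompose, in parallel with (\ref{eq:decomposition}) but centred on the prediction flow $\eta_n$ and using the operators $Q_{k,n}$ without the terminal potential $G_n$,
\[
\sqrt{N}\left(\hat{\eta}^N_n - \eta_n\right)(f_n) = \sum_{k=0}^{n} \sqrt{N}\,\beta^N_k\,V^N_k\!\left(\bar{d}_{k,n}\right) + \sqrt{N}\sum_{k=1}^n R^N_k,
\]
where $V^N_k = \hat{\eta}^N_k - \hat{\Phi}^N_k\hat{\eta}^N_{k-1}$ (with $V^N_0 = \hat{\eta}^N_0 - \eta_0$ the initialisation fluctuation), the prefactor is $\beta^N_k = \frac{\hat{\Phi}^N_k\hat{\eta}^N_{k-1}Q_{k,n}(1)}{\hat{\eta}^N_k Q_{k,n}(1)}$, the deterministic centred test function is $\bar{d}_{k,n} = \frac{Q_{k,n}(f_n)}{\eta_k Q_{k,n}(1)} - \eta_n(f_n)$, and the $R^N_k$ are perturbation residuals of the same type as in (\ref{eq:residualbound}). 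Because the decomposition lives on path space (the functions $Q_{k,n}(f_n)(x_{0:k})$ retain the full ancestral path), the genealogical structure of $\hat{\eta}^N_n$ is automatically accounted for. Here I use the conventions $Q_{n,n}=\mathrm{Id}$ and $\hat{\eta}^N_{-1}Q_{-1,n}=\eta_0 Q_{0,n}$, so that the $k=n$ summand is the final-step fluctuation and the $k=0$ summand the initialisation fluctuation.

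Second, I would record a finite-horizon law of large numbers for the cSMC system, $\hat{\eta}^N_k \overset{p}{\rightarrow} \eta_k$ and $\hat{\Phi}^N_k\hat{\eta}^N_{k-1}\overset{p}{\rightarrow}\eta_k$ as path measures, which holds using only the conditional i.i.d.\ structure and $\|G_k\|_\infty<\infty$ over the fixed horizon $n$ (in particular the mixing assumptions (A1),(A2) are not needed). This has two consequences. First, the Feynman--Kac normalisers $\hat{\eta}^N_k Q_{k,n}(1)$ and $\hat{\Phi}^N_k\hat{\eta}^N_{k-1}Q_{k,n}(1)$ are bounded away from zero in probability, so $\beta^N_k \overset{p}{\rightarrow} 1$; since $\sqrt{N}V^N_k(\bar{d}_{k,n}) = O_p(1)$, Slutsky lets me replace every $\beta^N_k$ by $1$ at a cost of $o_p(1)$, reducing the fluctuation part to a genuine $\mathcal{F}_k$-martingale sum $\sum_{k=0}^n \sqrt{N}\,V^N_k(\bar{d}_{k,n})$. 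Secondly, each residual $R^N_k$ has a numerator $N^{-1}\delta_{k-1}Q_{k-1,n}(1)$ of deterministic order $1/N$ and a denominator bounded below, hence $R^N_k = O_p(1/N)$ and $\sqrt{N}\sum_{k=1}^n R^N_k \overset{p}{\rightarrow} 0$; this is where the reference path is seen to be negligible, consistent with the limit being independent of $\bar{x}_{0:n}$.

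Third, I would apply a martingale central limit theorem to $\sum_{k=0}^n \sqrt{N}\,V^N_k(\bar{d}_{k,n})$. Conditionally on $\mathcal{F}_{k-1}$ the particles $X^1_k,\dots,X^N_k$ are i.i.d.\ from $\hat{\Phi}^N_k\hat{\eta}^N_{k-1}$ and $\mathbb{E}(V^N_k(\bar{d}_{k,n})\mid\mathcal{F}_{k-1})=0$, so after passing to the finer filtration in which the $N(n+1)$ particles are revealed one at a time the terms form a martingale-difference array with increments of size $O(1/\sqrt{N})$; the conditional Lindeberg condition is then immediate from the uniform boundedness of $\bar{d}_{k,n}$ (guaranteed by $\|f_n\|_\infty<\infty$, $\|G_k\|_\infty<\infty$ and $n<\infty$). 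The predictable quadratic variation satisfies
\[
\sum_{k=0}^n \hat{\Phi}^N_k\hat{\eta}^N_{k-1}\!\left(\left(\bar{d}_{k,n} - \hat{\Phi}^N_k\hat{\eta}^N_{k-1}(\bar{d}_{k,n})\right)^2\right) \overset{p}{\rightarrow} \sum_{k=0}^n \eta_k\!\left(\bar{d}_{k,n}^2\right),
\]
the convergence following from the consistency of the previous step together with $\hat{\Phi}^N_k\hat{\eta}^N_{k-1}(\bar{d}_{k,n})\overset{p}{\rightarrow}\eta_k(\bar{d}_{k,n})=0$. Reading off the $k=n$ term (where $Q_{n,n}=\mathrm{Id}$ gives $\bar{d}_{n,n}=f_n-\eta_n(f_n)$) as $\eta_n((f_n-\eta_n(f_n))^2)$ and the terms $k=0,\dots,n-1$ as the stated sum, the right-hand side is exactly $\sigma_n^2(f_n)$. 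The martingale central limit theorem then gives asymptotic normality, and combining with the Slutsky reductions of the second step completes the proof.

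The main obstacle is the second step rather than the CLT itself: establishing the finite-horizon consistency of the perturbed particle system and, above all, verifying that the reference-path perturbation $\delta_k$ --- which enters the selection with weight of order $1/N$ --- is negligible not only in the bias (the residuals $R^N_k$) but also in the predictable quadratic variation, so that the limiting variance $\sigma_n^2(f_n)$ coincides exactly with that of an unconditioned SMC algorithm. Once it is shown that $\hat{\Phi}^N_k\hat{\eta}^N_{k-1}$ and its squared-function integrals converge to the unperturbed limits $\eta_k$, the remaining martingale-CLT verifications are routine.
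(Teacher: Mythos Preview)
Your approach is correct and follows the same strategy as the paper: a telescopic decomposition isolating the conditional i.i.d.\ fluctuations $V^N_k=\hat\eta^N_k-\hat\Phi^N_k\hat\eta^N_{k-1}$, an argument that the reference-path perturbation is $o_p(N^{-1/2})$, and a martingale-array CLT with Lindeberg condition from boundedness. The architectural difference is minor. The paper uses an \emph{additive} decomposition $W^N_n=\sum_k V^N_k(d_{k,n})+L^N_n+S^N_n$ with no prefactor on the martingale increments, where $L^N_n$ is your $\sum R^N_k$ (the $\hat\Phi^N-\Phi$ perturbation) and $S^N_n$ is a second-order normalisation correction; it then proceeds by induction on $n$, using the CLT at earlier levels to supply the weak laws of large numbers. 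You instead use the \emph{ratio} form of the decomposition (as in the paper's Theorem~\ref{thm:telescopic}) with a random prefactor $\beta^N_k$, and state the LLN separately; the $S^N$ correction is effectively absorbed into the step $\beta^N_k\to 1$. Both routes are standard and equivalent.

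One point needs attention: the identity you write, with the \emph{deterministic} test function $\bar d_{k,n}=Q_{k,n}(f_n)/\eta_kQ_{k,n}(1)-\eta_n(f_n)$, is not an exact algebraic equality. The exact ratio decomposition of the type (\ref{eq:decomposition}) carries the \emph{random} test function $d_{\hat\Phi^N_k\hat\eta^N_{k-1},k,n}$, with $\hat\Phi^N_k\hat\eta^N_{k-1}Q_{k,n}(1)$ in the denominator and the random ratio as centring constant. Replacing it by $\bar d_{k,n}$ introduces an extra error $\beta^N_k\,V^N_k\bigl(d^{\mathrm{rand}}_{k,n}-\bar d_{k,n}\bigr)$; since the LLN gives $d^{\mathrm{rand}}_{k,n}-\bar d_{k,n}=o_p(1)$ uniformly and $\sqrt{N}\,V^N_k=O_p(1)$, this term is $o_p(1)$ and harmless, but it must be displayed and controlled, otherwise your first equation is simply false as an identity. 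This missing piece is exactly what the paper's $S^N_n$ term handles in the additive version.
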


\begin{remark} 
Comparing Theorem \ref{thm:clt} with Theorem 15.5.1 in \cite{delmoral:hal-00932211}  shows that the cSMC method in Algorithm \ref{alg:csmcparticles} exhibits the same asymptotic behavior as an SMC filtering or smoothing Algorithm for $N \rightarrow \infty$. 
Under assumptions (A1) and (A2) the asymptotic variance can be bounded linearly in $n$. One can also use results from \cite{Whiteley2013} to obtain a linear bound in $n$ under much more general conditions.  
\end{remark}

\begin{proof}

Write 
\[\bar{G}_{k} (x )= \frac{G_k (x)}{\eta_k (G_k )}\] and define 
\[\bar{Q}_{k,n} (f_n )  (x_{0:k} ) = \frac{Q_{k,n} ( f_n ) (x_k ) }{\eta_k Q_{k,n} (1 )}
\]

Note that 
\begin{equation}\label{eq:product}
 \hat{\eta}^N_{p-1}  ( \bar{G}_{p-1} ) \times  \Phi_p (\hat{\eta}^N_{p-1}) \bar{Q}_{p,n} (f_n ) = 
\hat{\eta}^N_{p-1}  \bar{Q}_{p-1,n} (f_n ) 
\end{equation}
This implies
\begin{align*}
 \Phi_p \hat{\eta}^N_{p-1} \bar{Q}_{p,n} (f_n ) - \hat{\eta}^N_{p-1} \bar{Q}_{p-1,n} (f_n ) = (1-  \hat{\eta}^N_{p-1} (\bar{G}_{p-1} ) )\Phi_p \hat{\eta}^N_{p-1} \bar{Q}_{p,n} (f_n ) 
\end{align*}



Write

\[
W^N_n ( f ) :=  \hat{\eta}^N_n (f ) -  \eta_n (f_n ) \nonumber = \sum\limits_{p=0}^n \hat{\eta}^N_p \bar{Q}_{p,n} (f_n ) - \hat{\eta}^N_{p-1} \bar{Q}_{p-1,n} (f_n ) \nonumber 
\]

Using that $\hat{\eta}^N_n (f_n ) -  \eta_n (f_n ) =  \hat{\eta}^N_n (f_n -\eta_n(f_n ) ) -  \eta_n (f_n  -\eta_n(f_n ) ) $ and 
\begin{align*}
\hat{\eta}^N_p d_{p,n} (f_n ) - \Phi_p \hat{\eta}^N_{p-1} d_{p,n} (f_n ) &= \hat{\eta}^N_p  d_{p,n} (f_n ) - \hat{\Phi}_p \hat{\eta}^N_{p-1}  d_{p,n} (f_n ) \\ 
&+ \hat{\Phi}_p \hat{\eta}^N_{p-1} d_{p,n} (f_n ) - \Phi_p \hat{\eta}^N_{p-1} d_{p,n} (f_n )
\end{align*}
 this yields
\begin{equation}\label{eq:WNfirst}
W^N_n ( f_n ) = \sum\limits_{p=0}^n \hat{\eta}^N_p d_{p,n} (f_n ) - \hat{\Phi}^N_p \hat{\eta}^N_{p-1} d_{p,n} (f_n ) + L^N_n ( f_n ) + S^N_n (f_n ) 
\end{equation}

with the residual terms
\begin{align*}
L^N_n = 
\sum\limits_{p=1}^n \hat{\Phi}^N_p \hat{\eta}^N_{p-1} d_{p,n} (f_n )  - \Phi_p \hat{\eta}^N_{p-1} d_{p,n} (f_n )
\end{align*}
and 
\begin{align*}
S^N_n (f_n ) &=  \sum\limits_{p=1}^n (1- \hat{\eta}^N_{p-1}   (\bar{G}_{p-1} ) ) \times \Phi_p (\hat{\eta}^N_{p-1}) d_{p,n} (f_n ) \\
&=  \sum\limits_{p=1}^n (\eta_{p-1} (\bar{G}_{p-1}  ) - \hat{\eta}^N_{p-1}   (\bar{G}_{p-1} ) ) \times 
\left( \Phi_p \hat{\eta}^N_{p-1} - \Phi_p \eta_{p-1}  \right) d_{p,n} (f_n ) \\
&= \sum\limits_{p=1}^n (\eta_{p-1} (\bar{G}_{p-1}  ) - \hat{\eta}^N_{p-1}   (\bar{G}_{p-1} ) ) \times 
\frac{1}{\hat{\eta}^N_{p-1} (\bar{G}_{p-1})} \left( \hat{\eta}^N_{p-1} - \eta_{p-1} \right) d_{p-1,n} (f_n ) 
\end{align*}

where the second step has used that $\eta_{p-1} (\bar{G}_{p-1}  ) = 1$ and $( \Phi_p \eta_{p-1} )  d_{p,n} (f_n )=0$.







We will use these decompositions to prove the Theorem via an induction argument on $n$. First, note that by Theorem 9.3.1 in \cite{delmoral2004} it follows that for any bounded measurable $f_0 : \mathcal{X} \rightarrow \mathbb{R}$ it holds  $ \sqrt{N} \left( \hat{\eta}^N_0 ( f_0 ) - \eta_0 (f_0 ) \right) \overset{d}{\rightarrow} \mathcal{N} (0, \eta_0 ( ( \varphi - \eta_0 (f_0 ) )^2 )) $ since $\hat{\eta}^N_0$ is the empirical measure of $N$ i.i.d. samples from the initial distribution $\eta_0 = m_0$. Hence, the Theorem is proved for $n=0$. 

Next, assume that the Theorem is true for all $0\leq k <n$. This assumption will be called the inductive assumption. The target is to prove that this assumption implies that the theorem is also true for $k=n$. To do this, we consider the different parts of the sum in \ref{eq:WNfirst}.
 
Note that $\hat{\eta}_k^N$ is an empirical measure consisting of a sum of i.i.d. samples from $\hat{\Phi}_{k}^N \hat{\eta}_{k-1}^N$. 
Hence, one can proceed in a similar way as in  Theorem 9.3.1 in \cite{delmoral2004} (see also section 15.5 in \cite{delmoral:hal-00932211}   ). 
Let $\varphi_k : \mathcal{X}^{k+1} \rightarrow \mathbb{R}$ be a bounded measurable function.  
The idea is to Theorem 1 on page 171 in \cite{Pollard1984}, which is a central limit theorem for triangular arrays. To verify the conditions for this Theorem let $X^1_k, \ldots, X^N_k$ with $0\leq k \leq n$ be as in Algorithm \ref{alg:csmcparticles}. Define  $U^{i,N}_{k,n} = \frac{1}{\sqrt{N}} \left( d_{k,n} (f_n) (X^i_{0:k}) - \hat{\Phi}^N_k \hat{\eta}^N_{k-1} ( d_{k,n} ( f_n ) ) \right)$ for $1\leq i \leq N$.
Note that $\mathbb{E} ( U^{i,N}_{k,n} \left| \mathcal{F}_{k-1} \right. ) = 0$ and that
\[
\sqrt{N} \left( \sum\limits_{k=0}^n \hat{\eta}^N_k d_{k,n} (f_n ) - \hat{\Phi}^N_k \hat{\eta}^N_{k-1} d_{k,n} (f_n )\right) = \sum\limits_{k=0}^n \sum\limits_{i=1}^N     U^{i,N}_{k,n} 
\] 

Also due to $\left\| G_{k-1} \right\|_{\infty} < \infty $ and $\left\| \varphi_k \right\|_{\infty} < \infty$ the inductive assumption together with Slutsky's theorem 
\[
\hat{\eta}_{k-1}^N  ( G_{k-1} ) \overset{p}{\rightarrow} \eta_{k-1} ( G_{k-1} )
\]
as $N \rightarrow \infty$ and 
\[
\hat{\eta}_{k-1}^N  ( Q_{k-1,k} \varphi_k ) \overset{p}{\rightarrow} \eta_{k-1} ( Q_{k-1,k} \varphi_k )
\]
as $N \rightarrow \infty$. Using the above results together with Slutsky's theorem leads to 
\[
( \hat{\eta}_{k-1}^N + N^{-1} \delta_{k-1}  )  Q_{k-1,k} ( \varphi_k^2) \overset{p}{\rightarrow} \eta_{k-1} ( Q_{k-1,k} \varphi_k^2 )
\]
and 
\[
( \hat{\eta}_{k-1}^N + N^{-1} \delta_{k-1}  ) (G_{k-1})   \overset{p}{\rightarrow} \eta_{k-1} ( G_{k-1} ) 
\]

Hence, by the continuous mapping theorem
\begin{align*}
\hat{\Phi}_k^N \hat{\eta}^N_{k-1} ( \varphi_k^2) &= \frac{( \hat{\eta}_{k-1}^N + N^{-1} \delta_{k-1}  )  Q_{k-1,k} ( \varphi_k^2)  }{     ( \hat{\eta}_{k-1}^N + N^{-1} \delta_{k-1}  ) (G_{k-1})   } \\ 
&\overset{p}{\rightarrow} \eta_{k} ( \varphi_k^2 ) 
\end{align*}
and 
\[
( \hat{\Phi}_k^N \hat{\eta}^N_{k-1} ( \varphi_k ) )^2 \overset{p}{\rightarrow} ( \eta_{k-1} ( Q_{k-1,k} \varphi_k ) )^2 
\]

Moreover, note that for a fixed $n$ it holds that 
\[C_n := \sup\limits_{0\leq p \leq n } \sup\limits_{x_{0:p} \in \mathcal{X}^{p+1}} d_{p,n} (f_n ) (x_{0:p}) < \infty\] 
since $f_n$ is bounded and measurable. Therefore, for all $0\leq p \leq n$ $d_{p,n} (f_n )$ is a bounded measurable function. Hence, we can set $\varphi_k = d_{k,n} (f_n )$ and obtain from the above convergence results
\begin{align*}
\sum\limits_{k=0}^n \sum\limits_{i=1}^N  \mathbb{E} (  ( U^{i,N}_{k,n} )^2 \left| \mathcal{F}_{k-1} \right. ) &= \sum\limits_{k=0}^n \hat{\Phi}_k^N \hat{\eta}^N_{k-1} ( d_{k,n}^2 ) -  ( \hat{\Phi}_k^N \hat{\eta}^N_{k-1} ( d_{k,n}) )^2 \\
&\rightarrow  \sum\limits_{k=0}^n \eta_{k} (d_{k,n}^2 ) -  (  \eta_{k} ( d_{k,n}) )^2
\end{align*}

Note that  $\left|  U^{i,N}_{k,n}  \right|  \leq \frac{2 C_n }{\sqrt{N} }$ for all $1\leq i\leq N$, $0\leq k \leq n$. This implies that the Lindeberg condition is satisfied, i.e. for all $\epsilon >0 $
\[
\lim\limits_{N\rightarrow \infty} \mathbb{E} (  \sum\limits_{k=0}^n  ( U^{i,N}_k )^2  1_{\left| U^{i,N}_k \right| > \epsilon } \left| \mathcal{F}_{k-1} \right. )  = 0 
\]
and therefore Theorem 1 on page 171 in \cite{Pollard1984} can be applied to obtain 
\begin{align}
\sqrt{N} & \left( \sum\limits_{p=0}^n \hat{\eta}^N_p d_{p,n} (f_n ) - \hat{\Phi}^N_p \hat{\eta}^N_{p-1} d_{p,n} (f_n )\right) \nonumber \\ 
& \overset{d}{\rightarrow} 
\mathcal{N} \left( 0,   \sum\limits_{k=0}^n \eta_k \left(    \left( \frac{Q_{k,n}(f_n)}{\eta_k Q_{k,n}(1) } - \frac{\eta_k (Q_{k,n} f_n )}{\eta_k Q_{k,n}(1) }     \right)^2  \right) \right) \label{eq:convergence}
\end{align}

 It remains to show that $\sqrt{N} S^N_n (f_n ) \overset{d}{\rightarrow} 0$ and $\sqrt{N}  L^N_n (f_n ) \overset{d}{\rightarrow} 0$ as $N\rightarrow \infty$ .  

Now consider the terms in $\sqrt{N} S^N_n(f_n )$. Note that by the inductive assumption and the boundedness of the $G_k$ for $0\leq k< n$  it holds
$\sqrt{N} (\eta_{k-1} (G_{k-1}  ) - \hat{\eta}^N_{k-1}   ( G_{k-1} ) ) \overset{d}{\rightarrow} \mathcal{N} (0, \sigma_{k-1} (  G_{k-1} ) )$ and therefore $\hat{\eta}^N_{k-1} (\bar{G}_{k-1}) \overset{P}{\rightarrow} \eta_{k-1} (\bar{G}_{k-1})$ and  $ (\hat{\eta}^N_{k-1} - \eta_{k-1} ) d_{k-1,n} (f_n ) \overset{P}{\rightarrow} 0$. By Slutsky's Theorem it follows that \[\frac{1}{\hat{\eta}^N_{k-1} (\bar{G}_{k-1})} \left( \hat{\eta}^N_{k-1} - \eta_{k-1} \right) d_{k-1,n} (f_n ) \overset{P}{\rightarrow} 0\] 
and again by Slutsky's Theorem 
\[
\sqrt{N } (\eta_{k-1} (\bar{G}_{k-1}  ) - \hat{\eta}^N_{k-1}   (\bar{G}_{k-1} ) ) \times \frac{1}{\hat{\eta}^N_{k-1} (\bar{G}_{k-1})} \left( \hat{\eta}^N_{k-1} - \eta_{k-1} \right) d_{k-1,n} (f_n ) \overset{d}{\rightarrow} 0 \] 
which implies $\sqrt{N}S^N_n (f_n ) \overset{d}{\rightarrow} 0$

To consider $\sqrt{N} L^N_n (f_n)  $ suppose $\varphi_k: \mathcal{X}^{k+1} \rightarrow \mathbb{R}$ is a bounded measurable function. Then,
\begin{align*}
\sqrt{N} & \left( \hat{\Phi}_k^N \hat{\eta}^N_{k-1} \varphi_k  - \Phi_k \hat{\eta}^N_{k-1} \varphi_k \right) \\
&=  N^{-1/2} \frac{    \hat{\eta}_{k-1}^N (G_{k-1}) \times Q_{k-1,k} (\varphi_k ) (\bar{x}_{0:k-1} ) - \hat{\eta}_{k-1}^N (Q_{k-1, k} (\varphi_k )) \times G_{k-1}  (\bar{x}_{k-1} )        }{(\hat{\eta}^N_{k-1} +  N^{-1} \delta_{k-1} ) ( G_{k-1} ) \times \hat{\eta}^N_{k-1} (G_{k-1} ) }\\
&\overset{p}{\rightarrow} 0
\end{align*}
as $N \rightarrow 0$. The convergence follows from applying the inductive assumption together with the boundedness of $G_{k-1}$ and $\varphi_k$ to obtain $\hat{\eta}_{k-1}^N (G_{k-1}) \overset{p}{\rightarrow} \eta_{k-1} (G_{k-1}) $ and $\hat{\eta}_{k-1}^N (Q_{k-1, k} (\varphi_k )) \rightarrow \eta_{k-1} (Q_{k-1, k} (\varphi_k )) $ and then applying the continuous mapping theorem to the fraction in the above equation. It follows that $ \hat{\Phi}_k \hat{\eta}^N_{k-1} d_{k,n} (f_n) - \Phi_k \hat{\eta}^N_{k-1} d_{k,n} (f_n) \overset{p}{\rightarrow} 0$ and therefore $\sqrt{N} L^N_n (f_n ) \overset{p}{\rightarrow} 0$.

We have derived convergence results for all three parts of (\ref{eq:WNfirst} ). By Slutsky's theorem they can be added to obtain
\[
W^N_n ( f_n ) \overset{d}{\rightarrow} 
\mathcal{N} \left( 0,   \sum\limits_{k=0}^n \eta_k \left(    \left( \frac{Q_{k,n}(f_n)}{\eta_k Q_{k,n}(1) } - \frac{\eta_k (Q_{k,n} f_n )}{\eta_k Q_{k,n}(1) }     \right)^2  \right) \right)
\]

This completes the induction step.


\end{proof}

\section{Discussion}
This paper shows that considering cSMC as a conventional SMC Algorithm with a perturbed Boltzmann-Gibbs measure can lead to interesting new insights into the method which presumably cannot be obtained using the minorization or coupling techniques of  \cite{Andrieu2013}, \cite{Lindsten2014} or \cite{Chopin2012}. In particular, it is possible to derive a central limit theorem and to prove that SMC and cSMC converge to the same asymptotic distribution. The observation that asymptotically the distribution of the particles is close to the target distribution can be used to motivate more sophisticated pMCMC algorithms which use the particle system simulated by cSMC to adapt the proposal distributions. 

In the following we compare our ergodicity and stability results to the ones which are already available in the literature. First note that our assumptions (A1), (A2) are very similar to the ones used by \cite{Andrieu2013} where they derive a bound for the ergodicity coefficient of cSMC under the assumption $\sup\limits_{x \in \mathcal{X}} Q_{k,n}(1)(x) / \eta_k Q_{k,n}(1) \leq \alpha$. \cite{Lindsten2014} use the slightly stronger assumption $1\leq \sup_{x\in\mathcal{X}}G_k (x) / \inf_{y\in\mathcal{X}} G_k (y) \leq \delta$. 

To compare our stability result in proposition \ref{prop:stable} with other results, note that \cite{Andrieu2013} prove that under their conditions the ergodicity coefficient is 

$\left( \frac{N-1}{N+ 2(\underline{g}-1 )}  \right)^n$. For $N-1 \geq C n$ this becomes $\exp \left(  \frac{2\alpha -1}{C}  \right)$ asymptotically as $n$ goes to infinity. Hence, we can conclude that both their and our proof technique show that cSMC can be stabilized by increasing the number of particles $N$ linearly with $n$. 

To obtain a more detailed analysis of the cSMC particle system it would be possible to apply the telescopic sum decomposition together with our definition of the perturbed Boltzmann-Gibbs measure to derive $\mathbb{L}^p$-error bounds, concentration and propagation of chaos inequalities, similar to the results in \cite{delmoral2004} . 

An interesting extension of this paper would be to consider cSMC with backward resampling. Numerical studies show that cSMC for multinomial and residual resampling the use of backward resampling techniques leads to a superior performance of the cSMC method \cite{Chopin2012}. One could combine the approach from this paper with \cite{jasra2013behaviour} to study these techniques.

\bibliographystyle{abbrvnat}	

\end{document}